\definecolor{citecolor}{HTML}{0000C0}
\definecolor{urlcolor}{HTML}{000080}
\newtheorem{theorem}{Theorem}
\newtheorem{lemma}[theorem]{Lemma}
\theoremstyle{remark}
\theoremstyle{definition}
\newtheorem{definition}[theorem]{Definition}
\newcommand{\namedref}[2]{\hyperref[#2]{#1~\ref*{#2}}}
\newcommand{\theoremref}[1]{\namedref{Theorem}{#1}}
\newcommand{\ESS}{\mathcal{S}}
\newcommand{\T}{\mathcal{T}}
\newcommand{\card}[1]{\left\lvert {#1} \right\rvert}
\newcommand{\bigcard}[1]{\bigl\lvert {#1} \bigr\rvert}
\newenvironment{mycover}
               {\list{}{\listparindent 0pt
                        \itemindent    \listparindent
                        \leftmargin    0pt
                        \rightmargin   0pt
                        \parsep        0pt}%
                \raggedright
                \item\relax}
               {\endlist}
\begin{document}

\hypersetup{
    pdfauthor={},
    pdftitle={},
}

\begin{mycover}
{\LARGE \textbf{Deterministic MST Sparsification\\in the Congested Clique}\par}

\bigskip
\bigskip

\textbf{Janne H.\ Korhonen}\\
{\small School of Computer Science, Reykjav\'{\i}k University\par}
\end{mycover}

\bigskip
\paragraph{Abstract.} We give a simple deterministic constant-round algorithm in the congested clique model for reducing the number of edges in a graph to $n^{1+\varepsilon}$ while preserving the minimum spanning forest, where $\varepsilon > 0$ is any constant. This implies that in the congested clique model, it is sufficient to improve MST and other connectivity algorithms on graphs with slightly superlinear number of edges to obtain a general improvement. As a byproduct, we also obtain a simple alternative proof showing that MST can be computed deterministically in $O(\log \log n)$ rounds. 
\bigskip

\clearpage

\section{Introduction}

\paragraph{MST in the congested clique.} The \emph{congested clique}~\cite{lotker05} is a specialisation of the standard CONGEST model of distributed computing; in the congested clique, each of the $n$ nodes of the network can send a different message of $O(\log n)$ bits to each other node each synchronous communication round. The congested clique model has attracted considerable interest recently, as the fully connected communication topology allows for much faster algorithms than the general CONGEST model.

Minimum spanning tree is perhaps the most studied problem in the congested clique model, and a good example of the power of the model. The \citet{lotker05} paper introducing the congested clique model gave an $O(\log \log n)$-round deterministic MST algorithm. Subsequently, even faster randomised algorithms have been discovered: the $O(\log \log \log n)$-round algorithm by \citet{Hegeman15_MST_logloglogn}, and the recent $O(\log^* n)$-round algorithm by \citet{logstarMST}.

\paragraph{MST sparsification.} Both of the above fast randomised MST algorithms are based on fast randomised \emph{graph connectivity} algorithms. To solve MST, they use a reduction of \citet{Hegeman15_MST_logloglogn} from MST to graph connectivity; this works by (1) reducing general MST into two instances of MST on graphs with $O(n^{3/2})$ edges using a randomised sampling technique of \citet{karger1995randomized}, and (2) reducing MST on sparse graphs to multiple independent instances of graph connectivity.

In this work, we take a closer look at the sparsification step of the \citet{Hegeman15_MST_logloglogn} reduction. Specifically, we show that it is possible to do obtain much stronger sparsification for connectivity problems in constant rounds without using randomness:
\begin{theorem}\label{thm}
Given a weighted graph $G = (V,E)$ and an integer $k$, we can compute in $O(k)$ rounds an edge subset $E' \subseteq E$ with $\card{E'} =  O\bigl(n^{1 + 1/2^k}\bigr)$ such that $E'$ contains one minimum spanning forest of $G$.
\end{theorem}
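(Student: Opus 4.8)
The plan is to maintain a chain of edge sets $E_0=E, E_1, \dots, E_k$, each containing a minimum spanning forest of $G$ and with $\card{E_\ell}=O(n^{1+1/2^\ell})$, where $E_\ell$ is obtained from $E_{\ell-1}$ in $O(1)$ rounds. First I would break ties in the edge weights by endpoint identifiers so that $G$ has a \emph{unique} minimum spanning forest $T$; then the only structural fact I need is the cycle property in the following form: if $F\supseteq T$ and $F$ is partitioned into blocks $F=\bigsqcup_j F_j$, then $\bigcup_j \mathrm{MSF}(V,F_j)\supseteq T$, since any edge discarded from a block is the heaviest edge on a cycle inside that block, hence inside $G$, hence not in $T$.

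The construction uses a chain of progressively coarser, identifier-based vertex partitions $\mathcal P_0,\mathcal P_1,\dots,\mathcal P_k$, where $\mathcal P_\ell$ has $p_\ell\approx n^{1/2^\ell}$ equal parts and each part of $\mathcal P_\ell$ is a union of parts of $\mathcal P_{\ell-1}$; every node can compute the $\mathcal P_\ell$-part of any vertex locally. Call a pair of parts of $\mathcal P_\ell$, together with the edges of the current edge set running between them, a \emph{level-$\ell$ block}; there are at most $p_\ell^2\le n$ of them, they partition the current edge set, and each spans at most $2n/p_\ell$ vertices. Step $\ell$ routes each level-$\ell$ block of $E_{\ell-1}$ to a dedicated node, which computes its minimum spanning forest locally, and sets $E_\ell$ to be the union of these; by the cycle property $E_\ell\supseteq T$, and $\card{E_\ell}\le p_\ell^2\cdot (2n/p_\ell)=O(np_\ell)=O(n^{1+1/2^\ell})$.

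The key point making each step run in $O(1)$ rounds is that a level-$\ell$ block contains only $O(n)$ edges of $E_{\ell-1}$ --- not because the input is balanced, but because of the previous step: each level-$(\ell-1)$ block contributed at most $2n/p_{\ell-1}$ edges to $E_{\ell-1}$ (the size of its MSF), and a level-$\ell$ block is a union of at most $(p_{\ell-1}/p_\ell)^2$ level-$(\ell-1)$ blocks, so it holds at most $(p_{\ell-1}/p_\ell)^2\cdot 2n/p_{\ell-1}$ of them, which equals $2n$ once we take $p_\ell=\sqrt{p_{\ell-1}}$. (Step $1$ needs no prior structure: a level-$1$ block has at most $(\sqrt n)^2=n$ original edges outright.) So in step $\ell$ every handler node receives $O(n)$ edges and every node sends $O(n)$ edges, hence one call to a deterministic $O(1)$-round routing primitive (Lenzen-style) completes the step. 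After $k$ steps we output $E_k$ in $O(k)$ rounds; taking $k=\Theta(\log\log n)$ leaves $O(n)$ edges, which one node can collect and solve, giving the $O(\log\log n)$-round MST algorithm.

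The part I expect to require the most care is running the invariant and the routing feasibility in lockstep: one must check that after step $\ell-1$ the edges of $E_{\ell-1}$ reside precisely at the nodes that computed the level-$(\ell-1)$ block forests (so each holds only $\le 2n/p_{\ell-1}$ edges to re-route), and that rounding $p_\ell$ so that the partitions stay nested and all counts are integral still preserves both the $O(n)$-edges-per-block bound and the $O(n^{1+1/2^\ell})$ output bound. Beyond this bookkeeping I do not expect a genuine obstacle.
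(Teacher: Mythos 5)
Your proposal is correct and follows essentially the same route as the paper: iteratively coarsen a nested vertex partition by square-rooting the number of parts, route each pair-of-parts block (with $O(n)$ edges, by the previous level's forest bound) to a dedicated node via Lenzen's routing, keep the union of the locally computed block MSFs, and bound the result by the blocks' vertex counts. The only cosmetic differences are that you argue correctness via the cycle property (discarded edges are heaviest on a cycle) where the paper uses the cut property, and you track sizes directly rather than through the paper's ``$\varepsilon$-sparse'' invariant; both are equivalent here.
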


In particular, \theoremref{thm} implies that graphs with slightly superlinear number of edges are the hardest case for connectivity problems in the congested clique, as graphs with linear number of edges can be learned by all nodes in constant rounds using the routing protocol of \citet{lenzen2013optimal}. Alas, our sparsification technique alone fails to improve upon the state-of-the-art even for deterministic MST algorithms, though applying \theoremref{thm} with $k = \log \log n$ does give an alternative deterministic $O(\log \log n)$ algorithm for MST in the congested clique.

\section{Deterministic MST Sparsification}

Let $\ESS = \{ S_1, S_2, \dotsc, S_\ell \}$ be a partition of $V$. For integers $i,j$ with $1 \le i \le j \le \ell$, we define
\[E^\ESS_{ij} = \bigl\{\{ u, v \} \in E \mathbin{:} u \in S_i \text{ and } v \in S_j \bigr\}\,,\]
and denote by $G^\ESS_{ij}$ the subgraph of $G$ with vertex set $S_{i} \cup S_{j}$ and edge set $E^\ESS_{ij}$.

\begin{definition}
For $0 < \varepsilon \le 1$, graph $G = (V, E)$ and partition $\ESS = \{ S_1, S_2, \dotsc, S_\ell \}$ of $V$, we say that $(G,\ESS)$ is \emph{$\varepsilon$-sparse} if $\ell = n^{\varepsilon}$, each $S \in \ESS$ has size at most $n^{1-\varepsilon}$ and for each $i,j$ with $1 \le i \le j \le \ell$, we have $\bigcard{E^{\ESS}_{ij}} \le 2n^{1-\varepsilon}$.
\end{definition}

If $(G,\ESS)$ is $\varepsilon$-sparse, then $G$ can have at most $2n^{1+\varepsilon}$ edges. Moreover, we will now show that we can \emph{amplify} this notion of sparseness from $\varepsilon$ to $\varepsilon/2$ in constant rounds. Observing that for any graph $G = (V,E)$ and $\ESS = \bigl\{ \{ v \} \mathbin{:} v \in V \bigr\}$, we have that $(G,\ESS)$ is $1$-sparse, we can start from arbitrary graph and apply this sparsification $k$ times to obtain sparsity $1/2^k$, yielding \theoremref{thm}.

For convenience, let us assume that the all edge weights in the input graph in distinct, which also implies that the minimum spanning forest is unique. If this is not the case, we can break ties arbitrarily to obtain total ordering of weights. Recall that each node in $V$ receives its incident edges in $G$ as input.

\begin{lemma}\label{lemma1}
Given a graph $G = (V,E)$ with distinct edge weights and unique MSF $F \subseteq E$, and a globally known partition $\ESS$ such that $(G,\ESS)$ is $\varepsilon$-sparse, we can compute a subgraph $G' = (V,E')$ of $G$ and a globally known partition $\T$ such that $(G', \T)$ is $\varepsilon/2$-sparse and $F \subseteq E'$ in constant number of rounds.
\end{lemma}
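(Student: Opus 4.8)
The plan is to obtain $\T$ by merging the parts of $\ESS$ into super-parts and then to prune the edges running between two super-parts down to a minimum spanning forest of the subgraph they span. Fix once and for all a globally known partition of the index set $\{1,\dots,\ell\}$ into $n^{\varepsilon/2}$ contiguous blocks of $n^{\varepsilon/2}$ indices each, and set $\T=\{T_1,\dots,T_{n^{\varepsilon/2}}\}$, where $T_a$ is the union of the parts $S_i$ whose index falls in the $a$-th block. Since $\ESS$ is globally known and the merging rule is fixed, $\T$ is globally known at no communication cost; it has $n^{\varepsilon/2}$ parts, and $\card{T_a}\le n^{\varepsilon/2}\cdot n^{1-\varepsilon}=n^{1-\varepsilon/2}$, so the first two conditions of $\varepsilon/2$-sparseness hold automatically. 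For the edge-count condition, observe that $G^{\T}_{ab}$ is the union of the at most $n^{\varepsilon}$ pieces $G^{\ESS}_{ij}$ with $S_i\subseteq T_a$ and $S_j\subseteq T_b$, so $\bigcard{E^{\T}_{ab}}\le n^{\varepsilon}\cdot 2n^{1-\varepsilon}=2n$ --- far too large. Hence, instead of keeping all of $E^{\T}_{ab}$, I would keep only $F_{ab}:=\mathrm{MSF}(G^{\T}_{ab})$; because $G^{\T}_{ab}$ has at most $\card{T_a}+\card{T_b}\le 2n^{1-\varepsilon/2}$ vertices, $\card{F_{ab}}<2n^{1-\varepsilon/2}$, which is exactly the target bound once we set $E'=\bigcup_{1\le a\le b\le n^{\varepsilon/2}}F_{ab}$. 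Since the edges of $F_{ab}$ run only between $T_a$ and $T_b$ (or within $T_a$ when $a=b$), the edge set of $G'$ between $T_a$ and $T_b$ is exactly $F_{ab}$, so $(G',\T)$ is $\varepsilon/2$-sparse.

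That $F\subseteq E'$ is the standard filtering argument via the cycle property. Every edge $\{u,v\}\in F$ lies in $E^{\T}_{ab}$ for a unique pair $a\le b$ (the one with $u\in T_a$, $v\in T_b$), since $\T$ partitions $V$. If such an edge $e$ were absent from $F_{ab}=\mathrm{MSF}(G^{\T}_{ab})$, then, the weights being distinct, $e$ would be the heaviest edge on some cycle of $G^{\T}_{ab}$; but $G^{\T}_{ab}$ is a subgraph of $G$, so this cycle would certify $e\notin\mathrm{MSF}(G)=F$, a contradiction. In particular, no preliminary pruning at the level of $\ESS$ is needed.

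The crux is carrying this out in $O(1)$ rounds. I would assign to every pair $a\le b$ a distinct coordinator node $w_{ab}$ by a fixed globally known rule --- possible because the number of such pairs is $\binom{n^{\varepsilon/2}}{2}+n^{\varepsilon/2}\le n^{\varepsilon}\le n$. Then: (i) each node $v$ sends every incident edge $\{v,u\}$, with its weight, to the coordinator $w_{ab}$ with $T_a\ni v$ and $T_b\ni u$ (breaking ties so each edge is sent once); (ii) each $w_{ab}$ computes $F_{ab}=\mathrm{MSF}(G^{\T}_{ab})$ locally from the edges it received; (iii) each $w_{ab}$ returns every edge of $F_{ab}$ to its two endpoints, so every node learns its incident edges of $E'$. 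Both communication phases are executed by the routing protocol of \citet{lenzen2013optimal}, and the only thing to check is that every node is source and sink of $O(n)$ messages. In phase (i), $w_{ab}$ receives $\bigcard{E^{\T}_{ab}}\le 2n$ edges by the count above, while a node $v\in S_i$ has, for each of the $\ell$ parts $S_j$, at most $2n^{1-\varepsilon}$ incident edges into $S_j$, so $\deg_G(v)\le \ell\cdot 2n^{1-\varepsilon}=2n$ --- this is the one place where the $\bigcard{E^{\ESS}_{ij}}\le 2n^{1-\varepsilon}$ clause of $\varepsilon$-sparseness is genuinely needed, rather than merely the resulting $O(n^{1+\varepsilon})$ bound on $\card{E}$. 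In phase (iii), $w_{ab}$ sends at most $2\card{F_{ab}}=O(n^{1-\varepsilon/2})$ messages, and a node receives at most $\deg_G(v)\le 2n$ of them. Hence each phase costs $O(1)$ rounds, and the whole amplification is $O(1)$ rounds. The remaining loose end is the integrality of $n^{\varepsilon}$, $n^{\varepsilon/2}$ and $n^{1-\varepsilon}$, which is absorbed into the constants by the usual floors and ceilings.
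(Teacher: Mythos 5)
Your proof is correct and follows essentially the same construction as the paper: merge $n^{\varepsilon/2}$ parts of $\ESS$ into each part of $\T$, assign each pair $(a,b)$ to a coordinator node, gather the at most $2n$ edges of $E^{\T}_{ab}$ via Lenzen's routing, compute the local MSF, and return its edges. The only cosmetic difference is that you justify $F \subseteq E'$ via the cycle property (heaviest edge on a cycle) where the paper uses the cut property (minimum edge across a cut); your explicit verification of the per-node message bounds for the routing step is a welcome detail the paper leaves implicit.
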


\begin{proof}
To obtain the partition $\T = \{ T_1, T_2, \dotsc, T_{n^{\varepsilon/2}} \}$, we construct each set $T_i$ by taking the union of $n^{\varepsilon/2}$ sets $S_j$. Clearly sets $T_i$ constructed this way have size $n^{1-\varepsilon}$, and this partition can be constructed by the nodes locally. Since $(G,\ESS)$ is $\varepsilon$-sparse, we now have that
\[ \bigcard{E^\T_{ij}} = \sum_{x \colon S_{x} \subseteq T_i} \sum_{y \colon S_{y} \subseteq T_j} \bigcard{E^{\ESS}_{xy}} \le (n^{\varepsilon/2})^2 2n^{1-\varepsilon} = 2n\,. \]
We assign arbitrarily each pair $(i,j)$ with $1 \le i \le j \le n^{\varepsilon/2}$ as a \emph{label} for distinct node $v \in V$. The number of such pairs $(i,j)$ is at most $(n^{\varepsilon/2})^2 = n^\varepsilon \le n$, so this is always possible, though some nodes may be left without labels. The algorithm now proceeds as follows:
\begin{enumerate}
    \item Distribute information about the edges so that node with label $(i,j)$ knows the full edge set $E^{\T}_{ij}$. Since $\bigcard{E^\T_{ij}} \le 2n$, this can be done in constant rounds using the routing protocol of \citet{lenzen2013optimal}.
    \item Each node with label $(i,j)$ locally computes a minimum spanning forest $F^{\T}_{ij}$ for the subgraph $G^\T_{ij}$ using information obtained in previous step. Since $\card{T_i \cup T_j} \le 2n^{1-\varepsilon/2}$, we also have $\bigcard{F^\T_{ij}} \le 2n^{1-\varepsilon/2}$.
    \item Redistribute information about the sets $F^\T_{ij}$ so that each node knows which of its incident edges are in one of the sets $F^\T_{ij}$. Again, this takes constant rounds.
\end{enumerate}
Taking $E' = \bigcup_{(i,j)} F^\T_{ij}$, we have that $(G',\T)$ is $\varepsilon/2$-sparse.
To see that $E'$ also contains all edges of $F$, recall the fact that an edge $e \in E$ is in MSF $F$ if and only if it is the minimum-weight edge crossing some cut $(V_1, V_2)$, assuming distinct edge weights (see e.g.~\citet{karger1995randomized}). If edge $e \in E^\T_{ij}$ is in $F$, then it is minimum-weight edge crossing a cut $(V_1, V_2)$ in $G$, and thus also minimum-weight edge crossing the corresponding cut in $G^\T_{ij}$, implying $e \in F^\T_{ij}$.
\end{proof}

\bigskip

\paragraph{Acknowledgements.} We thank Magn\'us M. Halld\'orsson, Juho Hirvonen, Tuomo Lempi\"ainen, Christopher Purcell, Joel Rybicki and Jukka Suomela for discussions, and Mohsen Ghaffari for sharing a preprint of \cite{logstarMST}. This work was supported by grant 152679-051 from the Icelandic Research Fund.


\pagebreak

\DeclareUrlCommand{\Doi}{\urlstyle{same}}
\renewcommand{\doi}[1]{\href{http://dx.doi.org/#1}{\footnotesize\sf doi:\Doi{#1}}}
\bibliographystyle{plainnat}
\bibliography{sparsification}

\end{document}